\newtheorem{theorem}{Theorem}
\newtheorem{proposition}{Proposition}
\newtheorem{corollary}[theorem]{Corollary}
\newtheorem{lemma}[theorem]{Lemma}
\DeclareMathOperator{\nn}{\mathbb{N}}
\DeclareMathOperator{\complex}{\mathbb{C}}
\newcommand{\PI}{{\ensuremath{\mbox{\small$\Pi$\hskip.012em\llap{$\Pi$}\hskip.012em\llap{$\Pi$}\hskip.012em\llap{$\Pi$}}}}}
\title{Perfect quantum state transfer on diamond fractal graphs}
\begin{document}

\author{Maxim Derevyagin$^1$, Gerald V. Dunne$^{1,2}$, Gamal Mograby$^1$\\ Alexander Teplyaev$^{1,2}$} 

\date{\today}

\maketitle

\noindent 
 {\bf Affiliation:}\\ {$^{1}$Mathematics Department, University of Connecticut, Storrs CT 06269\\
 $^{2}$Physics Department, University of Connecticut, Storrs CT 06269.\\
Email: 
{maksym.derevyagin@uconn.edu}   
{gerald.dunne@uconn.edu}  \\          
{gamal.mograby@uconn.edu}   
{alexander.teplyaev@uconn.edu}       }
 
 \begin{abstract}
 { 
 In the quest for designing novel protocols for quantum information and quantum computation, an important goal is to achieve perfect quantum state transfer for systems beyond the well-known one dimensional cases, such as 1d spin chains. We use methods from fractal analysis and probability to find a new class of quantum spin chains on fractal-like graphs (known as diamond fractals) which support  perfect quantum state transfer, and which have a wide range of different Hausdorff and spectral dimensions.
 The resulting systems are spin networks combining Dyson hierarchical model structure with transverse permutation symmetries of varying order.}
 \end{abstract}

\section{Introduction}
The study of state transfer was initiated by S.~Bose \cite{Bose03,Bose_review},  who  considered  a  $1D$  chain  of $N$
qubits coupled by the time-independent Hamiltonian. The main idea is to transport a quantum state from one end of the chain to the other. The transport
of the quantum state from one location to another is called perfect
if it is realized with probability 1, that is, without  dissipation. 
{In addition to its fundamental interest, this means that perfect quantum state transfer also has potential applications to the design of sub-protocols for quantum information and quantum computation \cite{Kay10,CVZh17,KLY17}.}
A number of one dimensional cases, 
when perfect transmission can be achieved, have been found
in some $XX$ chains with inhomogeneous couplings, see 
\cite[and references therein]{Kay10,Bose_review,christandl2004perfect,burgarth2005conclusive,burgarth2005perfect,karbach2005spin,Opperman10,Opperman12,Godsil,Godsil08,Godsil12,VZ12,qin2013high}. 
These models have the advantage that the perfect transfer
can be done without the need for active control. 
 Recently there has been active interest 
 to generalize these results to graphs with potentials and to graphs that are not one dimensional 
 \cite{KLY17,pretty,KMPPZ17,VZh12}. 
 {These works illustrate the fact  that perfect state transfer is a rare phenomenon, for which the construction of explicit examples remains rather non-trivial.}

The main result of our paper is to show that perfect quantum state transfer is possible on the large and diverse class of fractal-type diamond graphs, 
{which have different  geometrical properties including a wide range of dimensions.} 
These graphs have provided an important collection of structures with 
interesting physical and mathematical properties 
and a broad variety of geometries, see 
\cite{berker1979renormalisation,griffiths1982spin,
kaufman1984spin,
MT,ADT09,HK,BE,NT,AR18,AR19,T08,MT2,brzoska2017spectra} 
and Figures  \ref{fig:diamond}, \ref{fig:JustTheGraphnew}, \ref{fig:Diamond3GraphsNew}. 
The structure of these graphs is such that they combine spectral properties of Dyson hierarchical models and transport properties of one dimensional chains. The methods that we use are discretized versions of the methods recently developed in 
\cite{AR18,AR19} (see also \cite{AHTT18,ST18}), {which provides a construction of Green's functions for diamond fractals.}
Our work is part of a long term study of  mathematical physics on fractals and self-similar graphs 
\cite{v1,v2,ADT09,ADT10,ABDTV12,ACDRT,Akk,Dunne12,hanoi,HM20,mograby2020spectra,MDDTlax}, {in which novel features of quantum processes on fractals can be associated with the unusual spectral and geometric properties of fractals compared to regular graphs and smooth manifolds.}

\begin{figure}[H]
	\centering
	\resizebox{13cm}{!}{\input{Diamond_level3and4.tikz}}
	\caption{The most standard  diamond hierarchical fractal graphs, levels 3 and 4, with the similarity dimension $\text{dim}=2$,  \cite[Section 7]{v1} and \cite{berker1979renormalisation,griffiths1982spin,
			kaufman1984spin,ADT09,MT,HK,BE,NT,AR18,AR19,T08}. 
	}
	\label{fig:diamond}
\end{figure}

\begin{figure}[H]
	\centering
	\resizebox{16cm}{!}{\input{SecondExample_Diamond.tikz}}
	\caption{Diamond graphs of level 1,2,3 and 4 with uniformly bounded degree  and the similarity dimension $\text{dim}=\frac{\log6}{\log4}$, \cite{KT,MT,MT2,LP}.}
	\label{fig:JustTheGraphnew}
\end{figure}


\section{1-D Chains}
{We begin with a brief summary of perfect quantum state transfer on 1D chains \cite{Kay10}.}
{Consider} one dimensional Hamiltonians ${\bf H}$ of the $XX$ type
with nearest-neighbor interactions
\begin{equation*}
\label{spinHamilton}
{\bf H}=
\tfrac{1}{2}  
\sum_{n=0}^{N-1} J_{n+1}(\sigma_n^x \sigma_{n+1}^x + \sigma_l^y
\sigma_{n+1}^y) +  
\tfrac{1}{2} 
\sum_{n=0}^N B_n(\sigma_n^z +1),
\end{equation*}
where $J_n$ are the constants coupling the sites
$(n-1)$ and $n$, and $B_n$ are the strengths of  the magnetic field
at the sites $n$ ($n=0,1,\dots,N$). The symbols $\sigma_n^x, \:
\sigma_n^y,\: \sigma_n^z$ denote the standard Pauli matrices which act
as follows on the single qubit states $\ket{\downarrow}$ and $\ket{\uparrow}$:
\begin{eqnarray*}
\sigma^x \ket{\downarrow} = \ket{\uparrow},& \quad \sigma^y \ket{\downarrow} = -i \ket{\uparrow},& \quad \sigma^z \ket{\downarrow} = -\ket{\downarrow} \\
\sigma^x \ket{\uparrow} = \ket{\downarrow},& \quad \sigma^y \ket{\uparrow}  = i \ket{\downarrow},& \quad \sigma^z \ket{\uparrow} = \ket{\uparrow}.
\end{eqnarray*}
It is straightforward to see that
$
[{\bf H}, 
\: \sum_{n=0}^N (\sigma_n^z +1)]=0
$
and so the eigenstates of ${\bf H}$ split in subspaces
labeled by  the number of spins over the chain that are in state
$\ket{\uparrow}$.  It   suffices to restrict ${\bf H}$
to the subspace spanned by the states that contain only one
excitation. A natural basis for that subspace is
given by the vectors
$
\ket{n} = (0,0,\dots, 1, \dots, 0), \quad n=0,1,2,\dots,N,
$
where the only "$\uparrow$" occupies the $n$-th position.  In this basis,
the restriction ${\bf J}$ of ${\bf H}$ to the one-excitation subspace is given
by the following $(N+1) \times (N+1)$ symmetric tridiagonal matrix
\begin{equation}
\label{spinHamiltonMatrix}
{\bf J} =
 \begin{pmatrix}
  B_{0} & J_1 &  &  & {\bf 0}\\
  J_{1} & B_{1} & J_2 & &  \\
     &  J_2 & B_2 & \ddots &    \\
   &   &  \ddots &    \ddots& J_N  \\
   {\bf 0} & &  & J_N & B_N
\end{pmatrix}.
\end{equation}
Such matrices are called Jacobi matrices and, as usual for the theory of Jacobi matrices, we assume that $J_n>0$ for $n=1$, $2$, \dots $N$. Clearly, the action of the operator ${\bf J}$ on the basis vectors $\ket{n}$ gives 
\begin{equation*}
 {\bf J} \ket{n} = J_{n+1} \ket{{n+1}} + B_n \ket{n} + J_{n}
\ket{{n-1}},
\end{equation*}
for $n=0,1,\dots, N$, where we set $J_0=J_{N+1}=0.$
Now we can see that after some time $t$ the initial state will evolve into the state
$e^{it{\bf J}}\ket{0}.$
So, in order to transfer an excitation from the site $\ket{0}$ to the site $\ket{N}$ there should exist $T>0$ and $\phi\in{\mathbb R}$ such that
\begin{equation}
\label{PSTCondition}
e^{iT{\bf J}}\ket{0}=e^{i\phi}\ket{N}.
\end{equation}

As was noted in \cite{Kay10} the latter condition immediately implies that the entries of the Jacobi matrix $J$ satisfy  
the following relations
\[
B_n = B_{N-n}, \quad J_{n} = J_{N+1-n}, \quad n= 1, 2, \dots N,
\]
which is the mirror symmetry of the matrix ${\bf J}$. This property can also be expressed in the following way
 \[
 {\bf J} ={\bf R}{\bf J}{\bf R},  
\]
where the matrix ${\bf R}$, the mirror reflection matrix, is
\[
{\bf R}=\begin{pmatrix}
  0 & 0 & \dots & 0 & 1    \\
  0 & 0 & \dots  & 1 & 0  \\
    \vdots  & \vdots & \iddots & \vdots & \vdots      \\
    0 & 1&\dots&0&0\\
   1 & 0 &  \dots & 0 &0  \\

\end{pmatrix}.
\]
Furthermore, in \cite{Kay10} the following  necessary and sufficient
conditions  for  state  transfer  in  the chain corresponding to the mirror symmetric Jacobi matrix ${\bf J}$ was proved:
the ordered set of the eigenvalues $\lambda_k$ of ${\bf J}$ ($\lambda_{k-1}<\lambda_{k}$) must
satisfy 
\begin{equation}\label{PST_cond}
\lambda_k-\lambda_{k-1}=(2m_k+1)\pi/T,\quad k=1, 2,\dots, N,
\end{equation}
where $T$ is the state transfer time, and $m_k$ is a nonnegative
integer, which can vary with $k$.

As an example we can consider one of the simplest cases of spin chains with perfect state transfer discussed in \cite{christandl2004perfect}. To this end, let us set
\begin{equation}
\label{exampleUsedatEnd}
J_n=\frac{\sqrt{n(N+1-n)}}{2}, \quad B_n=0
\end{equation}  
and  so  the underlying Jacobi matrix is mirror symmetric and it corresponds to the symmetric Krawtchouk polynomials \cite{Szego}. Also, it is known that in this case we have that
\begin{equation}
\lambda_k=k-N/2, \quad k=0,1,\dots, N, 
\end{equation}
and, thus, $\lambda_k-\lambda_{k-1}=1$, which means that the condition \eqref{PST_cond} is satisfied with $T=\pi$. As a result, the corresponding 1D spin system can realize perfect state transfer with the transfer time $T=\pi$. For more examples of spin chains with perfect transfer, see \cite{Kay10,VZh12} and references therein. 

\section{Hamiltonians on Graphs}
\label{sec:ham}
We extend the results mentioned above to a collection of fractal-type diamond graphs. {These graphs are no longer one dimensional, so they can be used to study more complex quantum systems as an extension to the 1D spin chain models.  Indeed, the diamond fractals can have a wide variety of dimensions for different choices of their self-similar structure.} We equip these graphs with a general Hamiltonian that encodes their geometric information and takes the fractal-type diamond graph symmetries into account. 
{Our main result in this paper is to show that  perfect state transfer on this collection of fractal-type diamond graphs can be reduced to an appropriately constructed 1D chain. We effectively separate variables into a longitudinal direction and transverse directions related to a hierarchy of permutation symmetries. This separation leads to
conditions that are sufficient to both {\it construct} and {\it design} these general Hamiltonians in such a way that guarantees  perfect state transfer.}

The class of fractal-type diamond graphs studied in \cite{AR18} is a family of graphs $\{G_l\}_{l \geq 0}$ which is characterized by two sequences of numbers: a sequence of {\it branching parameters} $\{ \mathcal{N}_l \}_{l \geq 0}$, and a sequence of {\it segmenting numbers} 
$\{\mathcal{J}_l \}_{l \geq 0}$. {Each link on the graph branches into a given number of links, and is also segmented into a given number of links. See Figures \ref{fig:diamond}, 
\ref{fig:Diamond3GraphsNew} for some examples that illustrate this structure.}
These sequences generate inductively $\{G_l\}_{l \geq 0}$ in the following sense. At level $l$ we construct $G_l$ by replacing each edge from the previous level $G_{l-1}$ by $\mathcal{N}_l$ new branches, whereas each new branch is then segmented into $\mathcal{J}_l$ edges that are arranged in series. For our purposes in this paper, we will initialize $G_0$ as the one edge graph connecting two nodes. For example let $\mathcal{N}_l=\mathcal{J}_l=2$ for all levels $l \in \nn$ and $G_0$ be the one edge graph connecting a node $x_L$ with another node $x_R$. A construction of the first three levels is schematized in Fig
\ref{fig:Diamond3GraphsNew}. These graphs are fractal-type in the sense that the sequence $\{G_l\}_{l \geq 0}$  approximates a limit graph which is a special diamond fractal, see Figure \ref{fig:diamond} for higher levels \cite{ADT09}.


\begin{figure}[htb]
\centering
\resizebox{13cm}{!}{\begin{tikzpicture}
\tikzset{vertex/.style={circle,fill=#1,inner sep=0,minimum size=5.0pt}} 
\tikzset{edge/.style = {draw=black,   thin}}

\node[draw=none] at (-5.5,-2){$x_L$};
\node[draw=none] at (1.5, -2){$x_R$};

\node[vertex] (100) at  (-5.0,-2.0) {};
\node[vertex] (101) at  (1.0,-2.0) {};
\node[vertex] (102) at  (-2.0,-5.0) {};
\node[vertex] (103) at  (-2.0,1.0) {};
\draw[edge] (100) to (102);
\draw[edge] (100) to (103);
\draw[edge] (101) to (102);
\draw[edge] (101) to (103);

\node[draw=none] at (-5.5,2.5){$x_L$};
\node[draw=none] at (1.5, 2.5){$x_R$};

\node[vertex] (50) at  (-5.0,2.5) {};
\node[vertex] (51) at  (1.0,2.5) {};
\draw[edge] (50) to (51);

\node[draw=none] at (2.5,0){$x_L$};
\node[draw=none] at (9.5, 0){$x_R$};

\node[vertex] (10) at  (3.0,0.0) {};
\node[vertex] (11) at  (9.0,0.0) {};
\node[vertex] (12) at  (6.0,-3.0) {};
\node[vertex] (13) at  (6.0,3.0) {};
\node[vertex] (14) at  (4.5,-2.25) {};
\node[vertex] (15) at  (4.5,-0.75) {};
\node[vertex] (16) at  (4.5,0.75) {};
\node[vertex] (17) at  (4.5,2.25) {};
\node[vertex] (18) at  (7.5,-2.25) {};
\node[vertex] (19) at  (7.5,-0.75) {};
\node[vertex] (20) at  (7.5,0.75) {};
\node[vertex] (21) at  (7.5,2.25) {};
\draw[edge] (10) to (14);
\draw[edge] (10) to (15);
\draw[edge] (10) to (16);
\draw[edge] (10) to (17);
\draw[edge] (11) to (18);
\draw[edge] (11) to (19);
\draw[edge] (11) to (20);
\draw[edge] (11) to (21);
\draw[edge] (12) to (14);
\draw[edge] (12) to (15);
\draw[edge] (12) to (18);
\draw[edge] (12) to (19);
\draw[edge] (13) to (16);
\draw[edge] (13) to (17);
\draw[edge] (13) to (20);
\draw[edge] (13) to (21);

\tikzset{vertex_c/.style={circle,thick,draw=black!100,fill=gray!10 , minimum size=5mm}} 
\tikzset{edge_c/.style = {draw=black, ultra thick}}
\node[vertex_c] (0) at  (3.0,-5) {0};
\node[vertex_c] (1) at  (4.5,-5) {1};
\node[vertex_c] (2) at  (6.0,-5) {2};
\node[vertex_c] (3) at  (7.5,-5) {3};
\node[vertex_c] (4) at  (9.0,-5) {4};
\draw[edge_c] (0) to (1);
\draw[edge_c] (1) to (2);
\draw[edge_c] (2) to (3);
\draw[edge_c] (3) to (4);

\node[draw=none] at (-4.8,3.5){$G_0$};
\node[draw=none] at (-4.8,0){$G_1$};
\node[draw=none] at (4.0, 3.5){$G_2$};

\node[draw=none] at (2.2, -3.8){$\PI_2$};


\draw [thick,->] (9,-3.3)
        -- (9,-4.3)  ;
				
\draw [thick,->] (7.5,-3.3)
        -- (7.5,-4.3)  ;		

\draw [thick,->] (6,-3.3)
        -- (6,-4.3)  ;
				
\draw [thick,->] (4.5,-3.3)
        -- (4.5,-4.3)  ;

\draw [thick,->] (3,-3.3)
        -- (3,-4.3)  ;

\end{tikzpicture}}
	\caption{A construction of the first three levels $G_0$, $G_1$, $G_2$ and the mapping $\PI_2$}
	\label{fig:Diamond3GraphsNew}
\end{figure}

Let $V_l$ denote the set of nodes of {the diamond graph} $G_l$. We define the mapping $\PI_l:V_l \to \{0,\dots, N \}$ which assigns each node $x \in V_l$ the number of edges of the shortest path from $x$ to the node $x_L$. 
For the most standard fractal-type diamond graphs  in Figure~\ref{fig:diamond} we have $N=2^l$. This is not true in general; e.g.,  for the fractal-type graph in Figure~\ref{fig:JustTheGraphnew}  we have $N=4^l$. For level two the mapping $\PI_2$ is demonstrated in Figure \ref{fig:Diamond3GraphsNew}. 
The set of nodes $V_l$ is decomposed into 
a disjoint union of 
$N+1$ intrinsically transversal layers induced by the preimages of $\PI_l$, i.e. $V_l=\{ \PI_l^{-1}(0)\cup\PI_l^{-1}(1) \ldots \cup\PI_l^{-1}(N)\}$. In particular, when a node is in the  transversal layer $\PI_l^{-1}(n)$, then it has {an intrinsic} distance of $n$ edges to the node $x_L$.

A quantum state on $G_l$  is represented by  a complex-valued wave function on the nodes $V_l$. The space of quantum states is defined by  
\begin{equation*}
L^2(G_l)=\{\psi \ | \ \psi:V_l \to \complex \}
\end{equation*}
which is a Hilbert space equipped with the inner product
\begin{equation}
\label{key-inner}
	\bra{\psi}\ket{\varphi}_{G_l} = \sum_{x \in V_l} \psi(x) \overline{\varphi(x)} \mu_l(x)
\end{equation}
where the weights are given by $\mu_l(x)=\frac{1}{|\PI_l^{-1}(n)|}$ for $n = \PI_l(x)$ and $|\PI_l^{-1}(n)|$ denotes the number of nodes in the transversal layer $\PI_l^{-1}(n)$ that contains $x$. {This factor accounts for the transverse degeneracies due to the permutation symmetries at a given level $l$.} We  denote by $\ket{x}$ the wave function that assigns one to the node $x \in V_l$ and zeros elsewhere (one-excitation state on $G_l$). The set of all one-excitation states $\{ \ket{x_L}, \ldots, \ket{x_R} \}$ form a natural basis for $L^2(G_l)$.

An $l$-level Hamiltonian on $G_l$ is a Hermitian operator $\mathbf{H}_l$ acting on $L^2(G_l)$. To encode the geometric information of the fractal-type diamond graph $G_l$ in the Hamiltonian, we impose the following assumptions on $\mathbf{H}_l$:
\begin{itemize}
\item {{\it Nearest-neighbor coupling}:} for $x,y \in V_l$, let $\bra{x}\mathbf{H}_l\ket{y}_{G_l}=0$
if $x$ and $y$ are not connected by an edge, i.e.  the transition matrix element from the quantum state $\ket{y}$ to $\ket{x}$ is zero if the nodes $y$ and $x$ are not adjacent in $G_l$.
\item 
{{\it Symmetric coupling}:} for $x_1,y_1,x_2,y_2 \in V_l$ such that both $x_1,y_1$ and $x_2,y_2 $ are adjacent, let
\begin{align*}
\bra{x_1}\mathbf{H}_l\ket{y_1}_{G_l}=\bra{x_2}\mathbf{H}_l\ket{y_2}_{G_l} \\  \text{if } \PI_l(x_1)=\PI_l(x_2) \text{ and } \PI_l(y_1)=\PI_l(y_2),
\end{align*}
i.e.  the transition matrix elements are compatible with the intrinsically transversal layers of $G_l$.
\end{itemize}
{This means that we can} regard $\{0,\dots, N \}$ as the set of nodes of a 1D chain.
To reduce the perfect state transfer problem from the graph $G_l$ to this 1D chain, we introduce the following Hilbert space $L^2(\{0,\dots, N \})=\{\psi \ | \ \psi:\{0,\dots, N \} \to \complex \}$
equipped with the inner product
\begin{equation}
	\label{standard inner}
	\bra{\psi}\ket{\varphi}_l = \sum_{n=0}^N \psi(n) \overline{\varphi(n)}.
\end{equation}
Moreover we project a wave function in $L^2(G_l)$ to a wave function in $L^2(\{0,\dots, N \})$ through averaging its values on the transversal layers, 
\begin{eqnarray*}
P_l: L^2(G_l) & \to & L^2(\{0,\dots, N \}) \\
\psi &\mapsto &  P_l\psi(n)=\frac{1}{|\PI_l^{-1}(n)|}\sum_{x \in \PI_l^{-1}(n)}  \psi(x).
\end{eqnarray*}
A  simple calculation using the definition of the inner products gives $\bra{P_l\psi}\ket{\varphi}_l = \bra{\psi}\ket{P^\ast_l \varphi}_{G_l}$, where the adjoint operator $P^{\ast}_l $ of $P_l$ is defined by
\begin{eqnarray*}
P^{\ast}_l: L^2(\{0,\dots, N \})  \to  L^2(G_l) \\
\varphi  \mapsto P^{\ast}_l \varphi (x) = \varphi(\PI_l(x)).
\end{eqnarray*}
\begin{figure}[h]
	\centering
		\includegraphics[width=0.5\textwidth]{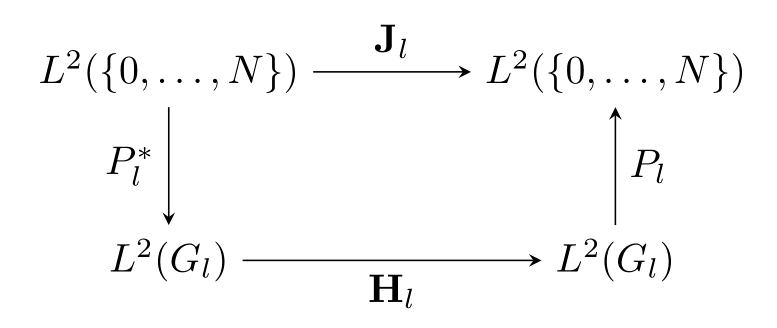}
	\caption{A diagram to explain the mapping between the Hamiltonian $\mathbf{H}_l$ on the diamond fractal graph $G_l$ and the effective Hamiltonian $\mathbf{J}_l$ on the associated one-dimensional spin chain. The definition of $\mathbf{J}_l=P_l \mathbf{H}_l P_l^{\ast}$: we apply first $P_l^{\ast}$, then $\mathbf{H}_l$, and finally $P_l$.}
	\label{fig:commutativeDiagram}
\end{figure}

\section{Main results and proofs}

The Hamiltonian $\mathbf{H}_l$ on $G_l$ induces an operator on the 1D chain $\{0,\dots, N \}$ by
$$\mathbf{J}_l  =  P_l \mathbf{H}_l P_l^{\ast}$$ which acts on $L^2(\{0,\dots, N \})$, see Figure \ref{fig:commutativeDiagram}. We denote the one-excitation states in $L^2(\{0,\dots, N \})$ by $\ket{n}$ for a node $n \in \{0,\dots, N \}$. Let $\mathbf{H}_l = (H_l(x,y))_{x,y \in V_l}$ be the matrix representation of $\mathbf{H}_l$ with respect to $\{ \ket{x_L}, \ldots, \ket{x_R} \}$. The following result relates the matrix elements of $\mathbf{H}_l$ to $\mathbf{J}_l$.  
\begin{proposition} \label{result1}
Let $x \in V_l$,
\begin{enumerate}
	\item $H_l(x,x) = \bra{\PI_l(x)}\mathbf{J}_l \ket{\PI_l(x)}_l$.
	\item Let $y \in V_l$ be adjacent to $x$ and $\PI_l(y)=\PI_l(x)\pm 1$, then
$$H_l(x,y) = \frac{1}{\mathbf{deg}_{\pm}(x) }\bra{\PI_l(x)}\mathbf{J}_l \ket{\PI_l(x)\pm 1}_l,$$ 
where $\mathbf{deg}_{\pm}(x)$ is defined as follows: let $x \in \PI_l^{-1}(n)$ for some $n \in \{0,\ldots, N-1 \}$ the mapping $\mathbf{deg}_{+}(x)$  assigns the node $x$ the number of edges that connect $x$ to nodes in  $\PI_l^{-1}(n+1)$. Similarly $\mathbf{deg}_{-}(x)$ assigns the node $x$ the number of edges that connect $x$ to nodes in $\PI_l^{-1}(n-1)$.
\item The Hamiltonian $\mathbf{H}_l$ is self-adjoint with respect to the inner product \ref{key-inner} if and only if $\mathbf{J}_l$ is self-adjoint with respect to the inner product \ref{standard inner}. 
\end{enumerate}
\end{proposition}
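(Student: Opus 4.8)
The plan is to route all three parts through a single bookkeeping identity. Since $\mathbf{J}_l=P_l\mathbf{H}_l P_l^{\ast}$, since $P_l^{\ast}\ket{m}$ is — by the definition of $P_l^{\ast}$ — exactly the indicator $\sum_{y\in\PI_l^{-1}(m)}\ket{y}$ of the $m$-th transversal layer, and since $P_l$ averages a wave function over each layer, a direct unwinding (equivalently, the adjoint relation $\bra{P_l\psi}\ket{\varphi}_l=\bra{\psi}\ket{P_l^{\ast}\varphi}_{G_l}$ applied twice) gives, for all $n,m\in\{0,\dots,N\}$,
\begin{equation*}
\bra{n}\mathbf{J}_l\ket{m}_l=\frac{1}{|\PI_l^{-1}(n)|}\sum_{x\in\PI_l^{-1}(n)}\,\sum_{y\in\PI_l^{-1}(m)} H_l(x,y).
\end{equation*}
Everything then reduces to evaluating this double sum using the two coupling hypotheses together with two purely geometric facts about the diamond graphs, which I would establish first: (i) every edge of $G_l$ joins two \emph{consecutive} transversal layers, so a single layer spans no edges; and (ii) $\mathbf{deg}_{+}$ and $\mathbf{deg}_{-}$ are constant on each layer, and for $x\in\PI_l^{-1}(n)$, $y\in\PI_l^{-1}(n+1)$ one has $|\PI_l^{-1}(n)|\,\mathbf{deg}_{+}(x)=|\PI_l^{-1}(n+1)|\,\mathbf{deg}_{-}(y)$, both sides equalling the number of edges joining the two layers. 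Both (i) and (ii) follow by induction on $l$ from the substitution rule $G_{l-1}\to G_l$.

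Granting (i)--(ii): part (1) is the case $m=n$, where (i) kills everything but the diagonal terms $H_l(x,x)$, symmetric coupling (taken with $x_1=y_1$, $x_2=y_2$) makes these all equal, and summing $|\PI_l^{-1}(n)|$ equal terms against the prefactor $1/|\PI_l^{-1}(n)|$ collapses to $H_l(x,x)$. Part (2) is the case $m=n\pm1$: nearest-neighbor coupling leaves only the adjacent pairs $(x,y)$, symmetric coupling makes each of them contribute the common value $H_l(x,y)$, the number of such pairs is $|\PI_l^{-1}(n)|\,\mathbf{deg}_{\pm}(x)$ by (ii), so the sum equals $\mathbf{deg}_{\pm}(x)\,H_l(x,y)$, which rearranges to the claimed formula.

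For part (3) the forward direction is formal: $\mathbf{J}_l^{\ast}=(P_l\mathbf{H}_l P_l^{\ast})^{\ast}=P_l\mathbf{H}_l^{\ast}P_l^{\ast}=P_l\mathbf{H}_l P_l^{\ast}=\mathbf{J}_l$, using $(P_l^{\ast})^{\ast}=P_l$. For the converse I would feed $\mathbf{J}_l=\mathbf{J}_l^{\ast}$ back into parts (1)--(2): by nearest-neighbor coupling the only possibly nonzero entries of $\mathbf{H}_l$ are the diagonal ones and those between adjacent layers, and (1)--(2) express each of these through the entries of $\mathbf{J}_l$; self-adjointness of $\mathbf{J}_l$ forces its diagonal to be real and gives $\bra{n+1}\mathbf{J}_l\ket{n}_l=\overline{\bra{n}\mathbf{J}_l\ket{n+1}_l}$, and combining these with the edge-count identity in (ii) reproduces exactly the Hermitian-symmetry relations of $\mathbf{H}_l$ for $\bra{\cdot}\ket{\cdot}_{G_l}$, namely $H_l(x,x)\in\mathbb{R}$ and $|\PI_l^{-1}(\PI_l(x))|^{-1}\,\overline{H_l(x,y)}=|\PI_l^{-1}(\PI_l(y))|^{-1}\,H_l(y,x)$ for adjacent $x,y$; that is, $\mathbf{H}_l=\mathbf{H}_l^{\ast}$. (Conceptually, $P_l^{\ast}$ is a unitary from $(L^2(\{0,\dots,N\}),\bra{\cdot}\ket{\cdot}_l)$ onto the subspace of layer-constant functions in $L^2(G_l)$, and the two coupling hypotheses make $\mathbf{H}_l$ block-diagonal for the orthogonal splitting into layer-constant functions and their complement, with the first block unitarily equivalent to $\mathbf{J}_l$ and both blocks built from the same coupling scalars; this gives a slicker but less self-contained proof of (3).) I expect the only real work to sit in establishing (i), (ii), and the weight/degree bookkeeping in the displayed identity — that is where the diamond geometry genuinely enters, and everything downstream of it is formal.
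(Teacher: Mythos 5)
Your proof is correct and takes essentially the same route as the paper's: both unwind $\mathbf{J}_l=P_l\mathbf{H}_lP_l^{\ast}$ on the layer indicator states $P_l^{\ast}\ket{n}$, use the nearest-neighbor and symmetric coupling hypotheses to collapse the resulting double sum to an edge count between consecutive layers, and settle part (3) via the identity $|\PI_l^{-1}(n)|\,\mathbf{deg}_{+}(x)=|\PI_l^{-1}(n+1)|\,\mathbf{deg}_{-}(y)$. The only difference is organizational: you state explicitly the geometric facts (no intra-layer edges, constancy of $\mathbf{deg}_{\pm}$ on each layer) that the paper's proof uses implicitly.
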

The following result justifies the reduction of the perfect transfer problem from $G_l$ to a 1D chain.
\begin{theorem} \label{result2}
If the   perfect state transfer on the 1D chain $\{0,\dots, N \}$ is achieved, i.e.  there exists $T_l>0$ such that $$e^{iT_l \mathbf{J}_l} \ket{0}=e^{i\phi}\ket{N}$$ for some phase $\phi$, then the  perfect state transfer on $G_l$ is also achieved with the same time $T_l$ and phase $\phi$, i.e.  
\begin{equation*}
e^{i T_l \mathbf{H}_l } \ket{x_L}= e^{i\phi}\ket{x_R}
\text{ \ and \ }
e^{i T_l \mathbf{H}_l } \ket{x_R}= e^{i\phi}\ket{x_L}.
\end{equation*}
\end{theorem}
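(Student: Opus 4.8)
The plan is to transport the assumed perfect transfer on the chain $\{0,\dots,N\}$ up to $G_l$ along the map $P_l^{\ast}$. Two preliminary observations make this possible. First, since the extreme transversal layers are singletons, $\PI_l^{-1}(0)=\{x_L\}$ and $\PI_l^{-1}(N)=\{x_R\}$, the formula $P_l^{\ast}\varphi(x)=\varphi(\PI_l(x))$ gives immediately $P_l^{\ast}\ket{0}=\ket{x_L}$ and $P_l^{\ast}\ket{N}=\ket{x_R}$. Second, the weights $\mu_l(x)=1/|\PI_l^{-1}(n)|$ are chosen exactly so that $P_l^{\ast}$ is an isometry of $L^2(\{0,\dots,N\})$ onto the subspace $W:=\operatorname{ran}P_l^{\ast}\subset L^2(G_l)$ of wave functions constant on every transversal layer; in particular $P_lP_l^{\ast}=\mathrm{id}$, so $P_l^{\ast}P_l$ is the orthogonal projection onto $W$.

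The crux is the intertwining identity
\begin{equation}\label{eq:intertwine}
\mathbf{H}_l P_l^{\ast}=P_l^{\ast}\mathbf{J}_l .
\end{equation}
Granting that $\mathbf{H}_l$ leaves $W$ invariant, \eqref{eq:intertwine} is immediate: for $\eta\in L^2(\{0,\dots,N\})$ the vector $\mathbf{H}_lP_l^{\ast}\eta$ lies in $W$, hence is fixed by $P_l^{\ast}P_l$, so $\mathbf{H}_lP_l^{\ast}\eta=P_l^{\ast}\big(P_l\mathbf{H}_lP_l^{\ast}\eta\big)=P_l^{\ast}\mathbf{J}_l\eta$ by the very definition $\mathbf{J}_l=P_l\mathbf{H}_lP_l^{\ast}$. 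To prove $\mathbf{H}_lW\subseteq W$, let $\psi=P_l^{\ast}\eta$, so $\psi(x)$ depends only on $n=\PI_l(x)$; using the nearest-neighbour assumption, and because the diamond construction forces every edge of $G_l$ to join consecutive transversal layers, the value of $\mathbf{H}_l\psi$ at $x\in\PI_l^{-1}(n)$ is
\[
H_l(x,x)\,\eta(n)\;+\;\Big(\sum_{y\sim x,\ \PI_l(y)=n-1}\!\!\!\! H_l(x,y)\Big)\eta(n-1)\;+\;\Big(\sum_{y\sim x,\ \PI_l(y)=n+1}\!\!\!\! H_l(x,y)\Big)\eta(n+1).
\]
By the symmetric-coupling assumption each matrix element here depends only on the transversal layers of its two arguments, and the multiplicities $\mathbf{deg}_{-}(x),\mathbf{deg}_{+}(x)$ of neighbours of $x$ in the adjacent layers depend only on $n$; hence all three bracketed coefficients are functions of $n$ alone, so $\mathbf{H}_l\psi$ is again layer-constant. (Equivalently one may verify \eqref{eq:intertwine} directly on each basis vector $\ket{n}$ by summing the formulas of Proposition~\ref{result1} over the layer $\PI_l^{-1}(n)$, the factors $1/\mathbf{deg}_{\pm}$ there cancelling against the corresponding edge counts.)

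From \eqref{eq:intertwine} one gets $\mathbf{H}_l^{\,k}P_l^{\ast}=P_l^{\ast}\mathbf{J}_l^{\,k}$ for all $k$ by induction, hence $e^{iT_l\mathbf{H}_l}P_l^{\ast}=P_l^{\ast}e^{iT_l\mathbf{J}_l}$ upon summing the exponential series. Combining this with the first preliminary observation and the PST hypothesis on the chain,
\[
e^{iT_l\mathbf{H}_l}\ket{x_L}=e^{iT_l\mathbf{H}_l}P_l^{\ast}\ket{0}=P_l^{\ast}e^{iT_l\mathbf{J}_l}\ket{0}=P_l^{\ast}\big(e^{i\phi}\ket{N}\big)=e^{i\phi}\ket{x_R}.
\]
For the reverse transfer, recall from the discussion of $1$D chains above that PST on $\{0,\dots,N\}$ forces $\mathbf{J}_l$ to be mirror symmetric, $\mathbf{R}\mathbf{J}_l\mathbf{R}=\mathbf{J}_l$ with $\mathbf{R}\ket{n}=\ket{N-n}$; since $\mathbf{R}$ then commutes with $e^{iT_l\mathbf{J}_l}$ we obtain $e^{iT_l\mathbf{J}_l}\ket{N}=\mathbf{R}e^{iT_l\mathbf{J}_l}\ket{0}=e^{i\phi}\mathbf{R}\ket{N}=e^{i\phi}\ket{0}$, and the computation just displayed, with $0$ and $N$ interchanged, yields $e^{iT_l\mathbf{H}_l}\ket{x_R}=e^{i\phi}\ket{x_L}$.

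The one genuine obstacle is the invariance $\mathbf{H}_lW\subseteq W$ (equivalently \eqref{eq:intertwine}): one must check that the transverse permutation degeneracies enter the problem solely through the weights $\mu_l$ in \eqref{key-inner} and the branch multiplicities $\mathbf{deg}_{\pm}$, in precisely the way that prevents $\mathbf{H}_l$ from destroying the layer-constancy of a wave function. This is where both structural hypotheses on $\mathbf{H}_l$ and the particular choice of inner product are essential; everything downstream is the holomorphic functional calculus together with the mirror symmetry already recorded for $1$D chains.
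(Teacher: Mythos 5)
Your argument is correct and follows essentially the same route as the paper: the identities $P_lP_l^{\ast}=\mathrm{id}$ and $P_l^{\ast}P_l=\mathbf{Proj}_l$ together with the invariance of the layer-constant subspace under $\mathbf{H}_l$ (the paper's Lemmas \ref{invariantSubspaceLemma} and \ref{PstarPlemma}) are exactly the ingredients used there, merely repackaged as the intertwining relation $\mathbf{H}_lP_l^{\ast}=P_l^{\ast}\mathbf{J}_l$ instead of the paper's power-by-power identity $(\mathbf{J}_l)^k=P_l(\mathbf{H}_l\mathbf{Proj}_l)^kP_l^{\ast}$ followed by a kernel argument. You in fact go a little further than the printed proof, which stops after establishing $e^{iT_l\mathbf{H}_l}\ket{x_L}=e^{i\phi}\ket{x_R}$: your deduction of the reverse transfer from the mirror symmetry that perfect state transfer imposes on the Jacobi matrix $\mathbf{J}_l$ (as quoted in the paper's 1D-chain section) supplies the second displayed identity, which the paper leaves implicit.
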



 For the purpose of proving the main results, we introduce first the following auxiliary definitions and lemmas. We define the space of functions $\psi \in L^2(G_l)$ that are constant on each transversal layer $\PI_l^{-1}(n)$ for $n \in \{0,\ldots, N \}$ and denote it by
 \begin{equation*}
 L^2_{sym}(G_l)=\{\psi \ | \  \psi(x)=\psi(y) \text{ if } \PI_l(x)=\PI_l(y) \}.
 \end{equation*}
 $L^2_{sym}(G_l)$ is a subspace of $L^2(G_l)$ and let $\mathbf{Proj}_l: L^2(G_l) \to L^2_{sym}(G_l)$ be the projection of $L^2(G_l)$ onto $L^2_{sym}(G_l)$. 
 \begin{lemma}
 	\label{invariantSubspaceLemma}
 	$L^2_{sym}(G_l)$ is an invariant subspace of $L^2(G_l)$ under $\mathbf{H}_l$.
 \end{lemma}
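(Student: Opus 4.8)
The plan is to show that $\mathbf{H}_l$ maps a symmetric function to a symmetric function, i.e. that if $\psi(x)=\psi(y)$ whenever $\PI_l(x)=\PI_l(y)$, then $(\mathbf{H}_l\psi)(x)=(\mathbf{H}_l\psi)(y)$ whenever $\PI_l(x)=\PI_l(y)$. First I would write out the action of $\mathbf{H}_l$ on a basis vector explicitly using the nearest-neighbor assumption: for $x\in\PI_l^{-1}(n)$,
\begin{equation*}
(\mathbf{H}_l\psi)(x) = H_l(x,x)\,\psi(x) + \sum_{y\sim x,\ \PI_l(y)=n+1} H_l(x,y)\,\psi(y) + \sum_{y\sim x,\ \PI_l(y)=n-1} H_l(x,y)\,\psi(y),
\end{equation*}
where the sum over neighbors splits into the layer $n+1$ and the layer $n-1$ because, by construction of the diamond graphs, every edge of $G_l$ connects a node in some layer $\PI_l^{-1}(n)$ to a node in the adjacent layer $\PI_l^{-1}(n\pm1)$ (the map $\PI_l$ changes by exactly one along any edge). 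I would state this combinatorial fact about the graphs $G_l$ explicitly, since it is what makes the layer decomposition compatible with adjacency.

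Next, fix $x,x'$ with $\PI_l(x)=\PI_l(x')=n$ and assume $\psi\in L^2_{sym}(G_l)$. Using the \emph{symmetric coupling} assumption, $H_l(x,x)=H_l(x',x')$, and for any neighbor $y$ of $x$ in layer $n+1$ and any neighbor $y'$ of $x'$ in layer $n+1$ we have $H_l(x,y)=H_l(x',y')$; likewise for layer $n-1$. Since $\psi$ is constant on each transversal layer, write $\psi(n)$ for its common value there. Then the first sum collapses to $H_l(x,y)\cdot \mathbf{deg}_{+}(x)\cdot\psi(n+1)$ and the second to $H_l(x,y)\cdot\mathbf{deg}_{-}(x)\cdot\psi(n-1)$. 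The remaining point to check is that $\mathbf{deg}_{+}(x)=\mathbf{deg}_{+}(x')$ and $\mathbf{deg}_{-}(x)=\mathbf{deg}_{-}(x')$; this is again a structural property of the diamond graphs $G_l$ — the number of edges from a node to the next/previous layer depends only on which layer the node sits in — and I would record it as part of the combinatorial preliminaries. Combining these, $(\mathbf{H}_l\psi)(x)$ depends only on $n$, $\psi(n-1)$, $\psi(n)$, $\psi(n+1)$, hence equals $(\mathbf{H}_l\psi)(x')$, so $\mathbf{H}_l\psi\in L^2_{sym}(G_l)$.

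The main obstacle, and the step deserving the most care, is the purely combinatorial claim that the self-similar construction of $G_l$ makes $\PI_l$ a graded/layered structure in which (i) edges only join consecutive layers, and (ii) the upward and downward degrees $\mathbf{deg}_{\pm}(x)$ are constant on each layer. This is intuitively clear from Figures \ref{fig:diamond}, \ref{fig:JustTheGraphnew}, \ref{fig:Diamond3GraphsNew} and can be proved by induction on $l$: the base case $G_0$ is a single edge with layers $\{0\},\{1\}$; the inductive step replaces each edge of $G_{l-1}$ by $\mathcal{N}_l$ parallel branches each segmented into $\mathcal{J}_l$ sub-edges, and one checks that the shortest-path distance to $x_L$ in $G_l$ of any node inserted on a branch replacing an edge between layers $m$ and $m+1$ of $G_{l-1}$ lies strictly between the corresponding rescaled values, so the new layering refines the old one consistently and the branch multiplicity $\mathcal{N}_l$ is the same for every edge at level $l$. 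Once this structural fact is in hand, the invariance of $L^2_{sym}(G_l)$ is immediate from the two coupling assumptions as above, which completes the proof of Lemma \ref{invariantSubspaceLemma}.
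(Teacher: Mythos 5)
Your proof is correct and takes essentially the same route as the paper: the paper applies $\mathbf{H}_l$ to the indicator function of a single transversal layer and uses the nearest-neighbor and symmetric-coupling assumptions to check the image is layer-constant, which by linearity is the same computation you perform for a general $\psi\in L^2_{sym}(G_l)$. The only difference is that you isolate and sketch an inductive proof of the structural facts — edges of $G_l$ join only consecutive layers, and $\mathbf{deg}_{+}$, $\mathbf{deg}_{-}$ are constant on each layer — which the paper uses implicitly through the notation $\mathbf{deg}_{+,n-1}$, $\mathbf{deg}_{-,n+1}$ and its ``one can easily verify'' formula.
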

 \begin{proof}
Let $ \PI_l^{-1}(n)=\{x_{n_1}, \ldots, x_{n_m} \}$ for some $n \in \{0,\ldots, N \}$ and $\ket{\psi}=\ket{x_{n_1}}+\ldots+\ket{x_{n_m}}$, i.e., for $x \in V_l$
\begin{equation*}
 \bra{x}\ket{\psi}_{G_l}=
  \begin{cases}
    \frac{1}{|\PI_l^{-1}(n)|}      & \quad \text{if } x \in \PI_l^{-1}(n) \\
   \quad  0 & \quad \text{otherwise } 
  \end{cases}
\end{equation*}
It suffices to show that $\mathbf{H}_l\ket{\psi} \in L^2_{sym}(G_l)$. By the symmetric coupling assumption on $\mathbf{H}_l$ we set $c_n=\bra{x_{n_1}}\mathbf{H}_l\ket{x_{n_1}}_{G_l}= \ldots = \bra{x_{n_m}}\mathbf{H}_l\ket{x_{n_m}}_{G_l}$. Note that any two nodes in the same transversal layer are not adjacent. Similarly, for the neighboring transversal layers, we set $c_{n-1}=\bra{x}\mathbf{H}_l\ket{y}_{G_l}$ for any adjacent nodes $x \in \PI_l^{-1}(n-1),y \in \PI_l^{-1}(n)$ for $n>0$ and $c_{n+1}=\bra{x}\mathbf{H}_l\ket{y}_{G_l}$ for any adjacent nodes $x \in \PI_l^{-1}(n+1), y \in \PI_l^{-1}(n)$ for $n < N$. One can easily verify the following formula,
\begin{equation*}
 \bra{x}\mathbf{H}_l\ket{\psi}=
  \begin{cases}
		\mathbf{deg}_{+, n-1}c_{n-1}      & \quad \text{if } x \in \PI_l^{-1}(n-1) \\
    c_n      & \quad \text{if } x \in \PI_l^{-1}(n) \\
		\mathbf{deg}_{-, n+1}c_{n+1}      & \quad \text{if } x \in \PI_l^{-1}(n+1) \\
    0 & \quad \text{otherwise } 
  \end{cases}
\end{equation*}
where the last case is implied by the nearest-neighbor coupling assumption on $\mathbf{H}_l$ and $\mathbf{deg}_{+, n-1}$ (or $\mathbf{deg}_{-, n+1}$) is the number of edges that connect a node in  $\PI_l^{-1}(n-1)$ (or $\PI_l^{-1}(n+1)$) to nodes in  $\PI_l^{-1}(n)$.
 \end{proof}
 \begin{lemma}
 	\label{rangeAdjointlemma}
 	The range of $P^{\ast}_l$ is $L^2_{sym}(G_l)$.
 \end{lemma}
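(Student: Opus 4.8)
The plan is to prove the set equality $\mathrm{Range}(P^{\ast}_l)=L^2_{sym}(G_l)$ by establishing the two inclusions separately, working directly from the definition $P^{\ast}_l\varphi(x)=\varphi(\PI_l(x))$.

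For the inclusion $\mathrm{Range}(P^{\ast}_l)\subseteq L^2_{sym}(G_l)$, I would take an arbitrary $\varphi\in L^2(\{0,\dots,N\})$ and two nodes $x,y\in V_l$ with $\PI_l(x)=\PI_l(y)$; then $P^{\ast}_l\varphi(x)=\varphi(\PI_l(x))=\varphi(\PI_l(y))=P^{\ast}_l\varphi(y)$, so $P^{\ast}_l\varphi$ is constant on every transversal layer $\PI_l^{-1}(n)$, which is precisely the defining property of $L^2_{sym}(G_l)$.

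For the reverse inclusion $L^2_{sym}(G_l)\subseteq\mathrm{Range}(P^{\ast}_l)$, given $\psi\in L^2_{sym}(G_l)$ I would define $\varphi\in L^2(\{0,\dots,N\})$ by picking, for each $n\in\{0,\dots,N\}$, any node $x_n\in\PI_l^{-1}(n)$ and setting $\varphi(n)=\psi(x_n)$. This is well defined: each layer $\PI_l^{-1}(n)$ is nonempty because $\PI_l$ is surjective onto $\{0,\dots,N\}$ by construction of $G_l$ (in particular $\PI_l^{-1}(0)=\{x_L\}$ and $\PI_l^{-1}(N)=\{x_R\}$), and the value $\psi(x_n)$ is independent of the choice of $x_n$ since $\psi$ is constant on layers. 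Then for every $x\in V_l$ one has $P^{\ast}_l\varphi(x)=\varphi(\PI_l(x))=\psi(x)$, i.e. $\psi=P^{\ast}_l\varphi\in\mathrm{Range}(P^{\ast}_l)$.

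I do not expect a genuine obstacle here; the only point deserving an explicit word is the surjectivity of $\PI_l$ (equivalently, nonemptiness of all transversal layers), which follows from the inductive construction of the diamond graphs. As an alternative, one could note that $P_l P^{\ast}_l=\mathrm{Id}$ on $L^2(\{0,\dots,N\})$, so $P^{\ast}_l$ is injective and $P^{\ast}_l P_l$ is an idempotent whose range is exactly $\mathrm{Range}(P^{\ast}_l)$; comparing with the definition of $\mathbf{Proj}_l$ then identifies $\mathrm{Range}(P^{\ast}_l)$ with $L^2_{sym}(G_l)$. The two-inclusion argument above is, however, the most direct route and is the one I would record.
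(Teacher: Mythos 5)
Your proof is correct and follows the same route as the paper, which simply notes the lemma "follows by the definition of $P^{\ast}_l$ and $L^2_{sym}(G_l)$"; you have merely spelled out the two inclusions (constancy on layers for the forward one, surjectivity of $\PI_l$ plus the obvious choice of $\varphi$ for the reverse one) that the paper leaves implicit. No gap.
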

 \begin{proof}
 	It follows by the definition of $P^{\ast}_l$ and $L^2_{sym}(G_l)$. 
 \end{proof}
 \begin{corollary}
 	\label{orthoComplementSym}
 	$ Ker P_l = (L^2_{sym}(G_l))^{\bot}$. In particular, if $\psi \in (L^2_{sym}(G_l))^{\bot}$, then the sum over a  transversal layer gives $\sum_{x \in \PI_l^{-1}(n)}  \psi(x)=0$ for $ n \in \{0,\dots, N \}$.
 \end{corollary}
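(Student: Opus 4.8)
The plan is to obtain the identity $\mathrm{Ker}\,P_l = (L^2_{sym}(G_l))^{\bot}$ from the adjoint relation $\bra{P_l\psi}\ket{\varphi}_l = \bra{\psi}\ket{P^\ast_l\varphi}_{G_l}$ recorded above, together with Lemma \ref{rangeAdjointlemma}. The underlying principle is the elementary fact --- here with no closure subtleties, since every space in sight is finite-dimensional --- that the kernel of a bounded operator equals the orthogonal complement of the range of its adjoint.

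First I would verify $\mathrm{Ker}\,P_l = (\mathrm{Range}\,P^\ast_l)^{\bot}$ directly. For $\psi \in L^2(G_l)$, the condition $P_l\psi = 0$ is equivalent to $\bra{P_l\psi}\ket{\varphi}_l = 0$ for all $\varphi \in L^2(\{0,\dots,N\})$: one direction is immediate, and the other follows by choosing $\varphi = P_l\psi$ and invoking positive-definiteness of $\bra{\cdot}\ket{\cdot}_l$. By the adjoint relation this last condition reads $\bra{\psi}\ket{P^\ast_l\varphi}_{G_l} = 0$ for all $\varphi$, i.e. $\psi$ is orthogonal to every element of $\mathrm{Range}\,P^\ast_l$. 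Substituting $\mathrm{Range}\,P^\ast_l = L^2_{sym}(G_l)$ from Lemma \ref{rangeAdjointlemma} yields the stated identity. For the concrete consequence, note that if $\psi \in (L^2_{sym}(G_l))^{\bot} = \mathrm{Ker}\,P_l$ then $P_l\psi(n) = \frac{1}{|\PI_l^{-1}(n)|}\sum_{x \in \PI_l^{-1}(n)}\psi(x) = 0$ for each $n \in \{0,\dots,N\}$, and since $|\PI_l^{-1}(n)| \ne 0$ the transversal-layer sum $\sum_{x \in \PI_l^{-1}(n)}\psi(x)$ vanishes.

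There is no real obstacle in this argument; it is a short deduction from Lemma \ref{rangeAdjointlemma} and the adjoint identity. The only point deserving attention is bookkeeping of the sesquilinearity convention (linear in the first slot, conjugate-linear in the second) when moving between $\bra{P_l\psi}\ket{\varphi}_l$ and $\bra{\psi}\ket{P^\ast_l\varphi}_{G_l}$; finite-dimensionality removes any need for density or closedness considerations.
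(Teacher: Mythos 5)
Your proposal is correct and follows the same route as the paper: the identity $\mathrm{Ker}\,P_l = (\mathrm{Range}\,P^\ast_l)^{\bot}$ combined with Lemma \ref{rangeAdjointlemma}, which the paper cites in one line and you simply spell out in detail (including the standard kernel--range-of-adjoint argument and the explicit vanishing of the layer sums). No gap; the extra detail is a harmless elaboration of the paper's proof.
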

 \begin{proof}
 	It follows with $ Ker P_l = (Range \ P^{\ast}_l)^{\bot}$ and Lemma \ref{rangeAdjointlemma}.
 \end{proof}
 \begin{lemma}
 	\label{PstarPlemma}
 	Let $\psi \in L^2(G_l)$ and $x \in V_l$, then
 	$P_l^{\ast} P_l \psi(x)=  \mathbf{Proj}_l \psi(x)$.
 \end{lemma}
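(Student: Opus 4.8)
The plan is to identify the composite operator $P_l^{\ast}P_l$ with the orthogonal projection $\mathbf{Proj}_l$, after which the pointwise assertion is just the evaluation of this operator identity at $x\in V_l$. Recall that the orthogonal projection onto a (finite-dimensional, hence closed) subspace is uniquely characterized among all linear operators by three properties: idempotence, self-adjointness with respect to the relevant inner product, and having that subspace as its range. So it suffices to check that $P_l^{\ast}P_l$ has all three properties with the subspace being $L^2_{sym}(G_l)$ and the inner product being \eqref{key-inner}.

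First I would record the auxiliary identity $P_l P_l^{\ast} = \mathrm{Id}$ on $L^2(\{0,\dots,N\})$: for any $\varphi$ and any $n\in\{0,\dots,N\}$,
\[
P_l P_l^{\ast}\varphi(n)=\frac{1}{|\PI_l^{-1}(n)|}\sum_{x\in\PI_l^{-1}(n)}P_l^{\ast}\varphi(x)=\frac{1}{|\PI_l^{-1}(n)|}\sum_{x\in\PI_l^{-1}(n)}\varphi(\PI_l(x))=\varphi(n),
\]
since $\PI_l(x)=n$ for every node $x$ in the layer $\PI_l^{-1}(n)$. Idempotence of $P_l^{\ast}P_l$ follows immediately, as $(P_l^{\ast}P_l)^2=P_l^{\ast}(P_l P_l^{\ast})P_l=P_l^{\ast}P_l$. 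Self-adjointness with respect to \eqref{key-inner} follows from the adjoint relation $\bra{P_l\psi}\ket{\varphi}_l=\bra{\psi}\ket{P_l^{\ast}\varphi}_{G_l}$ established just before the lemma: for $\psi,\phi\in L^2(G_l)$,
\[
\bra{P_l^{\ast}P_l\psi}\ket{\phi}_{G_l}=\bra{P_l\psi}\ket{P_l\phi}_l=\bra{\psi}\ket{P_l^{\ast}P_l\phi}_{G_l}.
\]

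For the range, the inclusion $\mathrm{Range}(P_l^{\ast}P_l)\subseteq\mathrm{Range}(P_l^{\ast})=L^2_{sym}(G_l)$ is immediate from Lemma \ref{rangeAdjointlemma}; conversely, if $\psi\in L^2_{sym}(G_l)$ then $\psi$ is constant on each transversal layer, so for $x\in V_l$,
\[
P_l^{\ast}P_l\psi(x)=\frac{1}{|\PI_l^{-1}(\PI_l(x))|}\sum_{y\in\PI_l^{-1}(\PI_l(x))}\psi(y)=\psi(x),
\]
showing that $P_l^{\ast}P_l$ restricts to the identity on $L^2_{sym}(G_l)$ and hence has range exactly $L^2_{sym}(G_l)$. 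Therefore $P_l^{\ast}P_l=\mathbf{Proj}_l$, and evaluating at $x$ gives $P_l^{\ast}P_l\psi(x)=\mathbf{Proj}_l\psi(x)$. There is no substantive obstacle here; the only point demanding attention is consistency of the notions of ``adjoint'' and ``orthogonal projection'' across the two different inner products \eqref{key-inner} and \eqref{standard inner}, which is precisely what the adjoint identity $\bra{P_l\,\cdot\,}\ket{\,\cdot\,}_l=\bra{\,\cdot\,}\ket{P_l^{\ast}\,\cdot\,}_{G_l}$ makes available. (Alternatively, one can avoid the operator language entirely: $L^2(G_l)$ splits orthogonally into the layerwise subspaces, $\mathbf{Proj}_l$ acts layer by layer, and on each layer the weights $\mu_l(x)=1/|\PI_l^{-1}(n)|$ are constant, so the orthogonal projection onto the constants is the ordinary average $\tfrac{1}{|\PI_l^{-1}(n)|}\sum_{y\in\PI_l^{-1}(n)}\psi(y)$, which is exactly $P_l^{\ast}P_l\psi$ on that layer.)
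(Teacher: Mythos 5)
Your proof is correct, but it follows a genuinely different route from the paper's. The paper argues pointwise: it decomposes $\psi=\psi_{sym}+\psi_{sym}^{\bot}$ with respect to the weighted inner product \eqref{key-inner}, uses Corollary \ref{orthoComplementSym} (i.e. $\ker P_l=(L^2_{sym}(G_l))^{\bot}$) to discard the perpendicular part, and then checks directly that averaging a layerwise-constant function returns that function, so $P_l^{\ast}P_l\psi_{sym}=\psi_{sym}=\mathbf{Proj}_l\psi$. You instead establish the operator identity $P_l^{\ast}P_l=\mathbf{Proj}_l$ by verifying the standard characterization of an orthogonal projection (idempotence via the auxiliary identity $P_lP_l^{\ast}=\mathrm{Id}$ on $L^2(\{0,\dots,N\})$, self-adjointness with respect to \eqref{key-inner} via the adjoint relation, and range equal to $L^2_{sym}(G_l)$ via Lemma \ref{rangeAdjointlemma} plus the fact that $P_l^{\ast}P_l$ fixes layerwise-constant functions), and then evaluate at $x$. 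Both arguments are sound and both ultimately rest on the same two ingredients (the range of $P_l^{\ast}$ and the fact that averaging fixes constants on layers); the paper's version is shorter and leans on the already-proved Corollary \ref{orthoComplementSym}, while yours packages the statement as a clean operator identity, makes explicit the useful relation $P_lP_l^{\ast}=\mathrm{Id}$ (left implicit in the paper), and carefully tracks which inner product the self-adjointness refers to — a point the paper glosses over. Your closing parenthetical layerwise argument is essentially the paper's proof in different words.
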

 \begin{proof}
 	We decompose $\psi=\psi_{sym} + \psi_{sym}^{\bot}$ such that $\psi_{sym} \in L^2_{sym}(G_l)$ and $\psi_{sym}^{\bot} \in (L^2_{sym}(G_l))^{\bot}$. Corollary \ref{orthoComplementSym} implies $P_l^{\ast} P_l \psi = P_l^{\ast} P_l \psi_{sym}$. Let $n \in \{0,\dots, N \}$ and $x\in \PI_l^{-1}(n)$. Then $\psi_{sym}$ is constant on the transversal layer $\PI_l^{-1}(n)$ and it's averaging gives $P_l \psi_{sym}(n) =\psi_{sym}(x)$.
Hence, by the definition of $P_l^{\ast}$, it follows  $P_l^{\ast} P_l \psi_{sym}(x) =  \psi_{sym}(x)=\mathbf{Proj}_l \psi(x)$.
 \end{proof}
\begin{proof}[Proof of Proposition \ref{result1}]
Let $ x \in \PI_l^{-1}(n)=\{x_{n_1}, \ldots, x_{n_m} \}$ for some $n \in \{0,\ldots, N \}$. We evaluate the matrix element,
\begin{equation*}
\bra{\PI_l(x)}\mathbf{J}_l \ket{\PI_l(x)}_l =  \bra{\PI_l(x)} P_l \mathbf{H}_l P_l^{\ast} \ket{\PI_l(x)}_l = \bra{P_l^{\ast} \PI_l(x)}  \mathbf{H}_l  \ket{P_l^{\ast} \PI_l(x)}_{G_l}  
\end{equation*}
where we adopt the notation $\ket{P_l^{\ast} \PI_l(x)} = P_l^{\ast} \ket{\PI_l(x)} =\ket{x_{n_1}}+\ldots+\ket{x_{n_m}}$. The assumptions on $\mathbf{H}_l$ imply
\begin{equation*}
\bra{\PI_l(x)}\mathbf{J}_l \ket{\PI_l(x)}_l =   \bra{x_{n_1}}  \mathbf{H}_l  \ket{x_{n_1}}_{G_l} +\ldots+ \bra{x_{n_m}}  \mathbf{H}_l  \ket{x_{n_m}}_{G_l} = |\PI_l^{-1}(n)| \bra{x}  \mathbf{H}_l  \ket{x}_{G_l} = H_l(x,x)
\end{equation*}
where the last equality holds as $|\PI_l^{-1}(n)|$ cancels the weights in the inner product defined on $G_l$. Similar reasoning will give the second part of the statement. Assume $y$ is adjacent to $x$ such that $ y \in \PI_l^{-1}(n+1)=\{y_{n_1}, \ldots, y_{n_k} \}$,
\begin{eqnarray}
\bra{\PI_l(x)}\mathbf{J}_l \ket{\PI_l(x)+1}_l &= &  \bra{P_l^{\ast} \PI_l(x)}  \mathbf{H}_l  \ket{P_l^{\ast}( \PI_l(x)+1)}_{G_l} \nonumber \\
&=& (\bra{x_{n_1}}+\ldots+\bra{x_{n_m}} )\  \mathbf{H}_l \ ( \ket{y_{n_1}}+\ldots+\ket{y_{n_k}}) \nonumber \\
&=& \mathbf{deg}_{+}(x) \  |\PI_l^{-1}(n)|  \ \bra{x}  \mathbf{H}_l  \ket{y}_{G_l} \nonumber \\
&=& \mathbf{deg}_{+}(x) \ H_l(x,y) \nonumber
\end{eqnarray}
To prove the third statement, we first observe that the previous computations verify the following equation,
\begin{equation*}
\bra{\PI_l(x)}\mathbf{J}_l \ket{\PI_l(x)+1}_l 
= \mathbf{deg}_{+}(x) \  |\PI_l^{-1}(n)|  \ \bra{x}  \mathbf{H}_l  \ket{y}_{G_l}
\end{equation*}
Similarly we can show,
\begin{equation*}
\bra{\PI_l(x)+1}\mathbf{J}_l \ket{\PI_l(x)}_l 
= \mathbf{deg}_{-}(y) \  |\PI_l^{-1}(n+1)|  \ \bra{y}  \mathbf{H}_l  \ket{x}_{G_l}.
\end{equation*}
Hence, it suffices to prove  $\mathbf{deg}_{-}(y) \  |\PI_l^{-1}(n+1)| = \mathbf{deg}_{+}(x) \  |\PI_l^{-1}(n)| $. The last equality holds as both left-hand side, and the right-hand side give the number of edges between the transversal layers $\PI_l^{-1}(n)$ and $\PI_l^{-1}(n+1)$.
\end{proof}
 \begin{proof}[Proof of Theorem \ref{result2}]
 	We observe 
 	\begin{eqnarray*}
 		(\mathbf{J}_l)^k  =  P_l \mathbf{H}_l P_l^{\ast} \cdot P_l \mathbf{H}_l P_l^{\ast} \cdots P_l \mathbf{H}_l P_l^{\ast} 
 		 =  P_l (\mathbf{H}_l \mathbf{Proj}_l)^k P_l^{\ast}
 	\end{eqnarray*}
 	where $P_l^{\ast} =\mathbf{Proj}_l P_l^{\ast}$ holds due to Lemma \ref{rangeAdjointlemma} and $P_l^{\ast} P_l =  \mathbf{Proj}_l$ due to Lemma \ref{PstarPlemma}. Hence,
 	\begin{eqnarray*}
 		P_l e^{iT_l \mathbf{H}_l \mathbf{Proj}_l} P_l^{\ast} \ket{0}=  P_l e^{i\phi} P_l^{\ast}\ket{N} \quad \Rightarrow \quad P_l e^{iT_l \mathbf{H}_l \mathbf{Proj}_l} \ket{x_L}=  P_l e^{i\phi}\ket{x_R}
 	\end{eqnarray*}
which implies $e^{iT_l \mathbf{H}_l \mathbf{Proj}_l} \ket{x_L}- e^{i\phi}\ket{x_R} \in Ker(P_l)$. Note that $\ket{x_L}=P_l^{\ast} \ket{0}$ and $\ket{x_R}=P_l^{\ast}\ket{N}$.
 	By Corollary \ref{orthoComplementSym} and by the fact that $\ket{x_L}, \ket{x_R} \in L^2_{sym}(G_l)$ we conclude, 
 	\begin{equation*}
 	e^{iT_l \mathbf{H}_l \mathbf{Proj}_l} \ket{x_L}= e^{i\phi}\ket{x_R}
 	\end{equation*} 
 	Let $\mathbf{Proj}_l^{\bot}$ be the projection of $L^2(G_l)$ onto $(L^2_{sym}(G_l))^{\bot}$. We observe, 
 	$$(\mathbf{H}_l \mathbf{Proj}_l + \mathbf{H}_l \mathbf{Proj}_l^{\bot})  \ket{x_L} = \mathbf{H}_l \mathbf{Proj}_l \ket{x_L},$$ which implies $e^{iT_l \mathbf{H}_l } \ket{x_L}= e^{i\phi}\ket{x_R}$.
 	
 \end{proof}

\section{Conclusions and outlook.}  

{Our construction provides an infinite set of new examples of novel geometries for which perfect quantum state transfer can be achieved. 
For example, using the spin-coupling values $J_n$  in \eqref{exampleUsedatEnd} for the simplest case of a spin chain, combined with our Hamiltonian construction in Proposition \ref{result1} and Theorem \ref{result2}, we find perfect quantum state transfer on diamond fractals such as those shown in Figures \ref{fig:diamond} and  \ref{fig:JustTheGraphnew}.
This clearly generalizes to the set of quantum systems on the graphs $G_l$ described in Section \ref{sec:ham}.
The basic projection idea  is  very general and   applicable to many other fractal-type graphs,
which allows to construct  further examples.  This opens up the possibility to {\it design} perfect quantum state transfer  on fractal-like structures with special features. The existence of results for Green's functions for these structures means that other quantum information properties such as fidelity and entanglement can be studied for these fractal structures. } 	
Moreover, our approach allows to consider other transport phenomena involving linear and nonlinear, classical and quantum waves on certain graphs, quantum graphs, and fractals. This will be the subject of future research
\cite{mograby2020spectra}.



\section{Acknowledgments}
This research was supported in part 
by the University of Connecticut 
Research Excellence Program, by 
DOE grant DE-SC0010339  
and by NSF DMS grants 1613025 and 2008844.  
The authors are grateful to 
Eric Akkermans,   
Patricia Alonso-Ruiz 
and 
Gabor  Lippner 
for interesting and helpful discussions. 
 The authors are grateful  to   anonymous referees for  suggested improvements to the paper.
 
 \bibliographystyle{unsrt}
 \bibliography{Quantum-State-Refs}

\end{document}